\newtheorem{theorem}{Theorem}
\begin{document}
\title{A Spatial-Spectral Interference Model for Millimeter Wave 5G Applications}

\author{\IEEEauthorblockN{Solmaz Niknam\\ and Balasubramaniam Natarajan}
\IEEEauthorblockA{Department of Electrical and Computer Engineering\\
Kansas State University\\
Manhattan, KS, 66506 USA\\
Email: slmzniknam@ksu.edu; bala@ksu.edu\\}
\and
\IEEEauthorblockN{Hani Mehrpouyan}
\IEEEauthorblockA{Department of Electrical and Computer Engineering\\
Boise State University\\
Boise, ID, 83725 USA\\
Email: hani.mehr@ieee.org}
}

\maketitle

\begin{abstract}
The potential of the millimeter wave (mmWave) band in meeting the ever growing demand for high data rate and capacity in emerging fifth generation (5G) wireless networks is well-established. Since mmWave systems are expected to use highly directional antennas with very focused beams to overcome severe pathloss and shadowing in this band, the nature of signal propagation in mmWave wireless networks may differ from current networks. One factor that is influenced by such propagation characteristics is the interference behavior, which is also impacted by simultaneous use of the unlicensed portion of the spectrum by multiple users. Therefore, considering the propagation characteristics in the mmWave band, we propose a spatial-spectral interference model for 5G mmWave applications, in the presence of Poisson field of blockages and interferers operating in licensed and unlicensed mmWave spectrum. Consequently, the average bit error rate of the network is calculated. Simulation is also carried out to verify the outcomes of the paper.
%Millimeter wave (mmWave) band is a promising candidate for emerging Fifth generation (5G) wireless networks. However, the sensitivity of highly directional mmWave signals to blockages changes the nature of signal propagation in mmWave wireless networks which in turn influences the interference behavior. The interference characteristic can also be affected by the utilization of the open access portion of the spectrum which causes unpredictable interference in spatial and spectral domains. Therefore, we propose a spatial-spectral interference model for 5G networks, in the presence of Poisson field of blockages and interferers operating in licensed and unlicensed mmWave spectrum.
\end{abstract}
\vspace{0.5cm}

\section{Introduction} \label{sec:intro}
One of the key enabling technologies of emerging \emph{fifth generation} (5G) wireless networks is the use of bandwidth in the \emph{millimeter-wave} (mmWave) frequencies, i.e, $30$--$300$ GHz~\cite{Andrew2014what}. However, due to undesirable propagation characteristics of mmWave signals such as severe pathloss, strong gaseous attenuation, low diffraction around objects and large phase noise, this section of spectrum has been underutilized. Having large antenna arrays that coherently direct the beam energy will help overcome the hostile characteristics of mmWave channels~\cite{Rappaport2014Millimeter}. However, utilization of the highly directional beams changes many aspects of the wireless system design. Such directional links (that are susceptible to blockages by obstacles along with the distinct mmWave propagation characteristics), will considerably affect the interference model. In fact, interference in the mmWave band may exhibit an on-off behavior~\cite{Andrew2014what}.

As new applications and standards compete to exploit open access frequencies, coexistence of licensed and unlicensed bands in 5G cellular networks is a critical consideration~\cite{Rappaport2013WillWork}. In addition, unlicensed frequencies provide a viable option for offloading traffic~\cite{Andrew2014what, Niknam2016mmWave}. With such mixed use of licensed and unlicensed bands, interference in the mmWave band may have a more unpredictable behavior that needs to be taken into consideration. In general, users may be randomly distributed in space and could be using a random subset of frequency bands.
There are multiple prior efforts that have focused on modeling the interference behavior. An uplink interference model for small cells of heterogeneous networks has been proposed in~\cite{Dong2016MobilityAware}. However, mmWave specifications in modeling the interference, i.e., the effect of the highly directional links and considerable sensitivity of mmWave beams to blockages are not taken into account. An interference model for wearable mmWave networks considering the effect of blockages has been suggested in ~\cite{Heath2015wearable}. However, the location of the interferers and the blockages are assumed to be deterministic. The authors in~\cite{Heath2012modeling} have suggested an interference model for randomly distributed interferers, using a stochastic geometry based analysis. However, similar to~\cite{Dong2016MobilityAware} and~\cite{Heath2015wearable}, in~\cite{Heath2012modeling}, interferers are considered only in spatial domain. Such a consideration may not be adequate to model the interference in networks operating in both licensed and unlicensed frequency bands, due to the randomness in utilizing the frequencies by terminals that share the same spectrum. Authors in~\cite{Hamdi2009unified} have suggested a spectral-spatial model for interference analysis in networks considering the unlicensed frequency bands. However, the effect of the presence of the blockages in the environment is not taken into account in the model.
%However, similar to~\cite{Heath2012modeling,Dong2016MobilityAware}, in~\cite{Hamdi2009unified} which, as we mentioned previously,
%In addition, the interferers are considered only in the spatial domain which may not be a precise model in case of coexistence of the licensed and unlicensed bands in mmWave 5G networks.
%A circular interference model for heterogeneous networks has been proposed in \cite{Martin2016circular}, where the topology of interferers is considered to be condensed in a power profile along circles around the target receiver, while a low number of principal interferers are taken into account. The latter assumption may limit the application of this model given the highly dense 5G networks. In addition, the locations of the interferers are assumed to be deterministic.
%mmWave specifications in modeling the interference, i.e., the effect of the highly directional links and considerable sensitivity of mmWave beams to blockages are not taken into account.
%blockages are not considered and that may not be a suitable model in mmWave communication networks due to the on-off interference pattern caused by considerable sensitivity of mmWave beams to blockages~\cite{Andrew2014what}.
%Authors in~\cite{Dong2016MobilityAware} have recommended a mobility-aware uplink interference model for heterogeneous networks, in which interference from macro users to the small cell users is calculated within a specific area around the small cells.
%Nevertheless, in ~\cite{Heath2012modeling,Dong2016MobilityAware}, .
In summary, current literature in interference modeling for 5G mmWave networks lacks the consideration for the propagation characteristics in the mmWave band, i.e., severe shadowing caused by highly directional links and the presence of blockages and simultaneous use of both licensed and unlicensed spectrums in the mmWave band.

In this paper, we propose a spatial-spectral model for interference analysis in 5G mmWave short-range wireless technologies while considering the impact of random number of blockages in the environment. Such technologies are a part of standards like IEEE 802.11 ad, wireless HD or short-range operating modes between devices for mmWave 5G cellular systems, where communication links range from 1-10\,m~\cite{Rappaport2014Millimeter}. We derive the closed-form expression of the moment generation function (MGF) of the aggregate interference to a victim receiver, considering blockages in the environment. Then, we use this MGF to derive the bit error rate (BER) expression at the victim receiver and validate it using Monte Carlo simulations of the network.

The remainder of this paper is organized as follows. Section~\ref{sec:sys_model} describes the considered system model. In Section~\ref{sec:int_analsys}, we calculate the closed-form expression of the MGF of the aggregated interference and perform the system evaluation. Section \ref{sec:simulation_results} and \ref{sec:conclusion} present simulation results and the conclusion, respectively.
\begin{figure*}[t]
\centering
\includegraphics[scale=0.48]{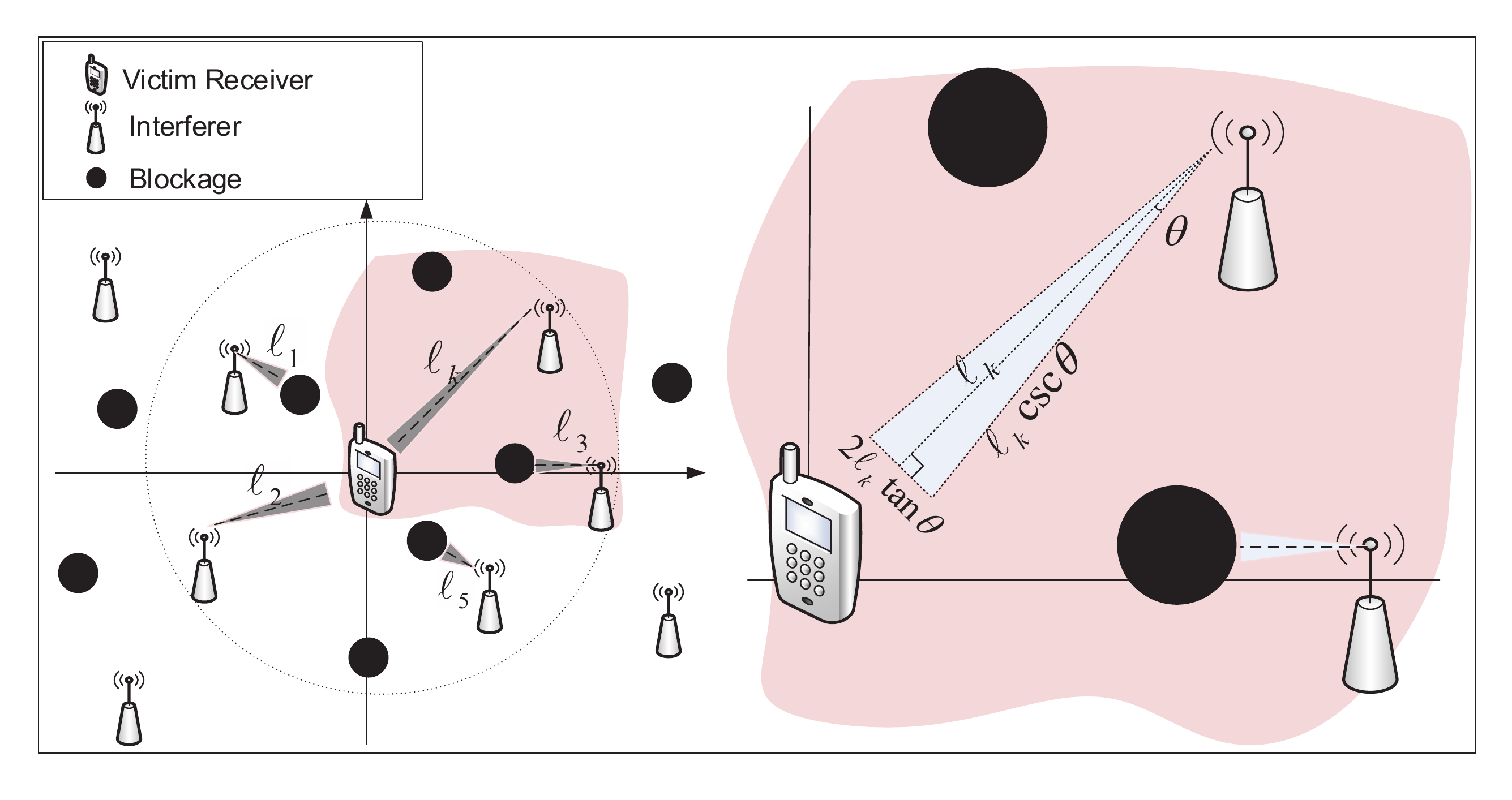}
\caption{The impact of interferers on the victim receiver in the presence of obstacles.}
\label{fig:blockage area}
\end{figure*}

\section{System Model} \label{sec:sys_model}
\vspace{-0.1cm}
We consider a transmitter-receiver pair in the presence of random number of interferers with the receiver at the origin of ${\rm I\!R}^2$ plane communicating with the transmitter over a desired communication link. The number of interferers follows a Poisson point process with parameter $\lambda$ in the space-frequency domain~\cite{Haenggi2009Interference}. We also model the spatial distribution of blockages as a Poisson point process with parameter $\rho$~\cite{Bai2014Blockage}. Considering the large scale signal attenuation, specially in case of mmWave signals that suffer greatly from gaseous attenuation and atmospheric absorbtion, only interference within a limited area around the victim receiver is significant~\cite{Dong2016MobilityAware,Martin2016circular}. A circular area of radius $D$ around the victim receiver is assumed and the number of interferes inside the interfering circle is Poisson distributed with parameter $\lambda \pi D^2$~\cite{Gong2014Interference}. Moreover, in this network, we are primarily concerned with active interferers that are in the line-of-sight (LoS) of the victim receiver. It is important to note that there could be other interferers that do not impact the victim receiver as their signals are blocked by obstacles.
%It is notable that the number of active interferers can vary according to the movement of the target receiver itself and the interferers.
Similar to \cite{Heath2015wearable}, we assume that there is no blockages in the desired communication link. The interferers and their distances to the victim receiver are denoted by $I_k$ and $\ell_k$, for $k=1, 2, ..., U$, respectively. For the $k^\text{th}$ individual interferer, we consider a radiation cone, denoted by $S_k$, where the edges are determined by the beamwidth of the signal. From Fig.~\ref{fig:blockage area}, it is evident that for the $k^\text{th}$ interferer, the radiation cone area, $A_{S_k}$, is given by
\begin{align} \label{cone_area}
A_{S_k}= \frac{{{2\ell_k}\tan \left( { \theta } \right).\,{\ell_k}}}{2} = \ell_k^2\tan \left( \theta  \right).
\end{align}
Similar to~\cite{Heath2015wearable,Bai2014Blockage} and~\cite{Gupta2017blockage}, we assume that the beamwidth of mmWave signals, $2\theta$, is narrow enough that the signal from an interferer is blocked if at least one blockage is presented in the radiation cone of the given interferer. That is, the beamwidth, $2\theta$, is such that the base of the radiation cone of the interferer is smaller than the dimension of the blockage. This considerable sensitivity to blockages results from the high directionality of mmWave signals~\cite{Andrew2014what,Rappaport2014Millimeter}. For instance, measurement results from~\cite{MacCartneyJr2016}, for a transmitter-receiver pair separated by 5\,m, indicate that an average sized body of depth of 0.28\,m causes 30-40\,dB power loss using directional antennas. Therefore, the probability of the $k^\text{th}$ interferer not being blocked, $p_k$, is obtained by
\begin{align} \label{p_not_blocked}
{p_k} = {{\rm e}^{ - \rho A_{S_k}}}={{\rm e}^{ - \rho \ell_k^2\tan \left( \theta  \right)}},
\end{align}
which is consistent with the 3GPP~\cite{3GPPrelease9} and potential indoor 5G 3GPP-like models~\cite{Katsuyuki2016GPP} as well, where the probability of having LoS decreases exponentially as the length of the link increases. Based on the above assumption and system configuration, the received signal at the victim receiver, $R(t)$, is given by
\begin{align} \label{received_signal}
R(t) = {i_0(t)} + \sum\limits_{k = 1}^K {{i_k(t)}}  + n(t),
\end{align}
where $K$ is the number of active interferers, $i_k(t)$ is the signal received from the $k^\text{th}$ interferer, $i_0(t)$ is the desired signal, and $n(t)$ is the additive white Gaussian noise (AWGN) with zero mean and variance $N_0/2$. The received interference signal from the $k^\text{th}$ interferer, can be represented as \cite{goldsmith2005wireless}
\begin{align} \label{received_intf_signal}
{i_k}(t) = \sqrt {{q_k}{h_k}\ell_{k}^{-\alpha}} \,\,{v_k}(t)\,{{\rm e}^{ - j2\pi {f_k}t + {\psi _k}}},
\end{align}
where ${v_k}(t)$ and $q_{k}$ are the baseband equivalent and transmitted power of the $k^\text{th}$ interferer, respectively, $\alpha$ is the pathloss exponent, and $h_k$ is a Gamma distributed random variable that represents the squared fading gains of the Nakagami-$m$ channel model (a generic model that can characterize different fading environments \cite{goldsmith2005wireless}). $f_k$ and $\psi_k$ denote the frequency and phase of the $k^\text{th}$ interferer, respectively, which are assumed to be random~\cite{Haenggi2009Interference}.

\section{Interference Analysis and System Performance} \label{sec:int_analsys}
In this section, the MGF of the accumulated interference is derived and used to quantify the average BER at the victim receiver. Using \eqref{received_signal} and \eqref{received_intf_signal}, the signal to interference and noise ratio (SINR) at the victim receiver can be determined as
\begin{align} \label{SINR}
\text{SINR}=\frac{{{{q_0}{h_0}\ell_{0}^{-\alpha}} }}{{\sum\limits_{k = 1}^K {\mathcal{P}_{I_k}}  + \sigma _n^2}},
\end{align}
where, $\sigma_{n}^{2}$ is the power of the additive noise bandlimited to the signal bandwidth $[-\frac{W}{2}{.}\frac{W}{2}]$. $q_0$, $h_0$, and $\ell_0$ are the transmitted power, the squared channel fading gain, and the distance between desired transmitter and the receiver, respectively. $\mathcal{P}_{I_k}$ is the effective received interference power from the $k^\text{th}$ interferer at the output of the matched filter which is obtained by
\begin{align} \label{effective_interference}
\mathcal{P}_{I_k}={q_k}{h_k}\ell_{k}^{-\alpha}\int_{ - W/2}^{ + W/2} {\Phi (f - {f_k}){{\left| {H(f)} \right|}^2}{\rm{d}}f}.
\end{align}
Here, $H(f)$ is the transfer function of the matched filter on the receiver side, and $\Phi(f)$ is the power spectral density of the baseband equivalent of the interferers' signals. In order to evaluate the performance of the network, we assume that given the distribution of $f_k$, $l_k$, and $\psi_k$, the received interference signal at the output of the matched filter is a complex Gaussian distributed signal. This is a valid assumption as shown in \cite{Chan2005Capacity}. Subsequently, we can relate the BER to SINR as BER=$\frac{1}{2} \text{erfc} \sqrt {c\text{SINR}}$, where $c$ is a constant that depends on modulation used~\cite{goldsmith2005wireless}. In order to find the average BER, we invoke the result in \cite{Hamdi2007Useful}, in which it is shown that the expected value of functions in the form of $g(\frac{x}{y+b})$ can be written as
\begin{align} \label{BER_SINR}
E_x\Bigg[g\left(\frac{x}{{y + b}}\right)\Bigg] = g(0) + \int\limits_0^\infty  {{g_m}({\rm{s}})\,{M_y}(m{\rm{s}}){{\rm e}^{-{\rm{s}}mb}}{\rm{d}}\rm{s}}.
\end{align}
Here, $x$ is a Gamma distributed random variable, ${M_y}(m\rm{s})$ represents the MGF of $y$ in a Nakagami-$m$ fading environment, $b$ is an arbitrary constant, and ${g_m}\left( \rm{s} \right){=}- \frac{{\sqrt c }}{\pi }\frac{{\Gamma (m + \frac{1}{2})}}{{\Gamma (m)}}\frac{{{\kern 1pt} {e^{{\rm{ - }}c{\rm{s}}}}}}{{\sqrt {\rm{s}} }}{}_1{F_1}(1-m;\frac{3}{2};c{\rm{s}})$, where ${}_1{F_1}\left( {a;b;\rm{s}}\right)$ is the confluent hypergeometric function.
In order to utilize \eqref{BER_SINR} to find the average BER, \eqref{SINR} can be rewritten as
\begin{align}
\text{SINR}=\frac{{{h_0}}}{{\frac{1}{{{q_0}\ell _0^{ - \alpha }}}\sum\limits_{k = 1}^K {{\mathcal{P}_{{I_k}}}}  + {{\sigma _n^2} \mathord{\left/
{\vphantom {{\sigma _n^2} {{q_0}\ell _0^{ - \alpha }}}} \right.
\kern-\nulldelimiterspace} {{q_0}\ell _0^{ - \alpha }}}}},
\end{align}
where, $h_0$ is a Gamma distributed random variable and ${{{\sigma _n^2} \mathord{\left/{\vphantom {{\sigma _n^2} {{s_0}\ell _0^{ - \alpha }}}} \right.\kern-\nulldelimiterspace} {{q_0}\ell _0^{ - \alpha }}}}$  is a constant. Therefore, using \eqref{BER_SINR}, the average BER can be written based on the MGF of the accumulated interference as
\begin{align} \label{Average_BER} \notag
&{E_{{h_0}}}\left[ {\text{BER}} \right]= \frac{1}{2} - \frac{\sqrt{c}}{\pi}\frac{\Gamma(m+\frac{1}{2})}{\Gamma(m)} \int_0^\infty  \frac{{\,_1}{F_1}\big( {1 - m;\frac{3}{2};c\rm{s}}\big)}{\sqrt{\rm{s}}} \\
&\hspace{3.5cm}\times{M_I}\left( {m\rm{s}} \right){e^{ - \left( {mb + c} \right)\rm{s}}}{\rm{d}}\rm{s},
\end{align}
where
\begin{align} \label{Interference_MGF}
&{M_I}\left( \rm{s} \right){=} E\left[{{{\rm e}^{ \frac{\rm{s}}{{{q_0}\ell _0^{ - \alpha }}}\sum\limits_{k = 1}^K {{q_k}{h_k}\ell _k^{ - \alpha }\Omega \left( {{f_k}} \right)} }}} \right],\\
&\Omega(f_k)=\int_{ - W/2}^{ + W/2} {\Phi (f - {f_k}){{\left| {H(f)} \right|}^2}{\rm{d}}f}.
\end{align}

Since \eqref{Interference_MGF} is the MGF of sum of a random number of random variables, the distribution of the random variable $K$, i.e., the number of active interferers, is needed.
\newtheorem{lemma}{Lemma}
\begin{lemma}\label{lem:lemma2}
The number of active interferers, $K$, within the circular area of radius $D$ (around the victim receiver) and the signal bandwidth $W$, is a Poisson random variable with parameter $\frac{{\lambda{\pi}{W} \left( {1 - {{\rm e}^{ - {D^2}\rho \tan \theta }}} \right)}}{{\rho \tan \theta }}$.
\end{lemma}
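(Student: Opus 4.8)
The plan is to recognize the statement as an instance of \emph{thinning} of a Poisson point process. By assumption the interferers form a homogeneous Poisson process of intensity $\lambda$ over the space--frequency domain, so the interferers lying inside the disk of radius $D$ centered at the victim receiver and within the band of width $W$ constitute a Poisson process whose total count $U$ is Poisson with mean $\lambda\pi D^2 W$. An interferer becomes \emph{active} precisely when it is in LoS, i.e.\ when it survives the blocking test, which by \eqref{p_not_blocked} happens with the distance-dependent probability $p(\ell)=e^{-\rho\ell^2\tan\theta}$. Thus $K$ is obtained from the underlying process by retaining each point independently with a location-dependent probability, and the marking/thinning theorem for Poisson processes guarantees that the retained points again form a Poisson process; it remains only to evaluate its mean.

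First I would write the mean of the thinned process as the integral of the retained intensity $\lambda\,p(\ell)$ over the region. Parameterizing the spatial part in polar coordinates $(\ell,\phi)$ and noting that the retention probability depends on neither the angle $\phi$ nor the frequency $f$, the frequency integral contributes a factor $W$ and the angular integral a factor $2\pi$, giving
\begin{align}\label{eq:thinmean}
\Lambda &= \lambda W\int_{0}^{2\pi}\!\!\int_{0}^{D} e^{-\rho\ell^2\tan\theta}\,\ell\,\mathrm{d}\ell\,\mathrm{d}\phi \notag\\
&= 2\pi\lambda W\int_{0}^{D} e^{-\rho\ell^2\tan\theta}\,\ell\,\mathrm{d}\ell .
\end{align}
Next I would evaluate the remaining radial integral with the substitution $u=\ell^2$ (so $\ell\,\mathrm{d}\ell=\tfrac12\mathrm{d}u$), which turns it into an elementary exponential integral and yields $\tfrac{1}{2}(1-e^{-D^2\rho\tan\theta})/(\rho\tan\theta)$. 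Substituting back into \eqref{eq:thinmean} collapses the constants to the claimed parameter $\lambda\pi W\,(1-e^{-D^2\rho\tan\theta})/(\rho\tan\theta)$, completing the proof.

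Equivalently, and perhaps more transparently for a reader who prefers to avoid invoking the full thinning theorem, one may condition on $U$: given $U=u$ the interferer positions are i.i.d.\ uniform over the disk--band region, each is kept with probability $p(\ell)$, and the classical fact that a Poisson count thinned by i.i.d.\ Bernoulli trials is again Poisson (with mean scaled by the average retention probability $\bar p$) gives $K\sim\mathrm{Poisson}(\lambda\pi D^2 W\,\bar p)$; computing $\bar p=\tfrac{1}{\pi D^2 W}\int p\,\mathrm{d}\mu$ reproduces exactly the same integral as in \eqref{eq:thinmean}.

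The step I expect to be the genuine subtlety --- as opposed to the routine integral --- is the \emph{independence} of the blocking events across interferers that the thinning argument requires. A single Poisson field of blockages governs all the cones $S_k$ of \eqref{cone_area}, and these cones can in principle overlap, in which case the survival events are correlated. The resolution rests on the narrow-beam assumption already invoked in the model: for beamwidth $2\theta$ small the cones are slender and pairwise overlaps are negligible, so the per-interferer void probabilities $e^{-\rho A_{S_k}}$ factorize and the process may be treated as an independent thinning. I would state this independence explicitly as the hypothesis under which the Poisson form of $K$ holds.
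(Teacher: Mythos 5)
Your proposal is correct, and it effectively contains two proofs. Your primary route---invoking the marking/thinning theorem for Poisson point processes and integrating the retained intensity $\lambda\, p(\ell)$ over the disk--band region---is a genuinely different framing from the paper's, though it reduces to the same radial integral. The paper proceeds more elementarily: it writes $K=\sum_{k=1}^{U}X_{I_k}$ as a random sum of Bernoulli indicators with success probability \eqref{p_not_blocked}, averages the conditional PGF $G_{X|\ell_k}(\mathrm{z})=p_k\mathrm{z}+(1-p_k)$ over the radial density \eqref{radial_density}, and then composes generating functions, $G_K=G_U\circ G_X$, recognizing the result as a Poisson PGF. This is precisely the ``condition on $U$'' alternative you sketch in your third paragraph, so in substance you have reproduced the paper's argument as well; the thinning-theorem framing buys a stronger conclusion (the active interferers form a Poisson \emph{process}, not merely a Poisson count) and generalizes immediately to non-uniform intensities, while the paper's PGF computation is self-contained and avoids citing point-process machinery. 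The subtlety you flag at the end---that a single blockage field governs all radiation cones, so survival events of different interferers are in principle correlated through overlapping cones---is exactly the assumption the paper makes explicit in its proof: it postulates that blockages affect each link independently and cites~\cite{Bai2014Blockage} for the claim that ignoring shared blockages has negligible effect on accuracy. Your insistence on stating this independence as an explicit hypothesis matches the paper's treatment, and is indeed the only non-routine step in the lemma.
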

\begin{proof}
Let $K=X_{I_1}+X_{I_2}+...+X_{I_U}$, where $X_{I_k}$ is the indicator that the $k^\text{th}$ interferer is not blocked with probability $p_k$, given by \eqref{p_not_blocked}. In order to make the analysis tractable, we assume that the blockages affect each link independently, i.e., the number of the blockages on different links are independent. The assumption of two links share no common blockages has negligible effect on accuracy~\cite{Bai2014Blockage}. Consequently, $X_{I_k}$ can be modeled as an i.i.d Bernoulli distributed random variable with success probability $p_k$ and $I_k$ is assumed to take on Poisson distribution as presented in section~\ref{sec:sys_model}. Therefore, given the distance $\ell_k$, the probability generating function (PGF) of $X$ is obtained by~\footnote{The subscript of $X_{I_k}$ is dropped for notational simplicity.}
\begin{align}\label{GX_given_l}\notag
{G_{{X|\ell_k}}}\left( {\rm{z}} \right) &= {p_k}{\rm{z}} + (1 - {p_k})\\
&={{{\rm{e}}^{ - \rho \ell_k^2\tan \left( \theta  \right)}}}{\rm{z}} + (1 - {{{\rm{e}}^{ - \rho \ell_k^2\tan \left( \theta  \right)}}}).
\end{align}
In networks with Poisson field of interferers, the probability density function (PDF) of $\ell_k$, i.e., the distance of the $k^\text{th}$ interferer to the victim receiver, is given by \cite{Haenggi2009Interference},
\begin{equation}\label{radial_density}
\mathbb{P}(\ell) = \left\{ \begin{array}{l}
\frac{{2\ell}}{{{D^2}}}\,\,\,\,\,\,\,\,\,0 < \ell < D\\
0\,\,\,\,\,\,\,\,\,\,\,\,\,\text{elsewhere}.
\end{array} \right.
\end{equation}
Based on \eqref{radial_density}, we can average out $\ell$ in \eqref{GX_given_l} leading to
\begin{align}
{G_X}\left( {\rm{z}} \right)=\frac{{\left( {1 - {\rm{z}}} \right)\left( {{{\rm e}^{ - \rho {D^2}\tan \theta }} - 1} \right)}}{{\rho {D^2}\tan \theta }}+1.
\end{align}
Subsequently, the PGF of $K$ is given by
\begin{align}\notag
{G_K}\left( {\rm{z}} \right)&= {\rm E}\Big[ {{{\mathop{\rm z}\nolimits} ^{\sum\limits_{k = 1}^U {{X_{I_k}}} }}} \Big]= \sum\limits_{k \ge 0} {{{\left( {{\rm E}\left[ {{{\mathop{\rm z}\nolimits} ^X}} \right]} \right)}^k}p\left( {U = k} \right)}\\
&= {G_U}\left( {{G_X}({\mathop{\rm z}\nolimits} )} \right)={{\rm e}^{\lambda \pi W \left( {\frac{{1 - {{\rm e}^{ - {D^2}\rho \tan \theta }}}}{{\rho \tan \theta }}} \right)\left( {{\rm{z}} - 1} \right)}},
\end{align}
which is the PGF of a Poisson random variable with parameter $\frac{{\lambda \pi W \left( {1 - {{\rm e}^{ - {D^2}\rho \tan \theta }}} \right)}}{{\rho \tan \theta }}$.
\end{proof}
\begin{theorem}\label{Theo:Theorem1}
The closed-form expression for the MGF of the accumulated interference corresponds to \eqref{Final_Closed_form_MGF}.
\vspace{-0.5cm}
\end{theorem}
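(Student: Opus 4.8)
The plan is to recognize $M_I(\mathrm{s})$ in \eqref{Interference_MGF} as the moment generating function of a \emph{compound Poisson sum} and to evaluate it by composing the probability generating function of $K$ from Lemma~\ref{lem:lemma2} with the MGF of a single interferer's contribution. Writing the normalized per-interferer term as $Z_k = q_k h_k \ell_k^{-\alpha}\Omega(f_k)/(q_0\ell_0^{-\alpha})$, the interferers are independent and identically distributed, so conditioning on $K$ and exploiting independence gives
\begin{align}
M_I(\mathrm{s}) = E_K\big[(M_Z(\mathrm{s}))^K\big] = G_K\big(M_Z(\mathrm{s})\big),
\end{align}
where $M_Z(\mathrm{s})=E[e^{\mathrm{s}Z}]$ is the single-term MGF. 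Since Lemma~\ref{lem:lemma2} establishes that $K$ is Poisson with parameter $\mu=\lambda\pi W(1-e^{-D^2\rho\tan\theta})/(\rho\tan\theta)$, its PGF is $G_K(\mathrm{z})=e^{\mu(\mathrm{z}-1)}$, and hence $M_I(\mathrm{s})=\exp\big(\mu(M_Z(\mathrm{s})-1)\big)$. This reduces the entire problem to the closed-form evaluation of $M_Z(\mathrm{s})$.

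First I would compute $M_Z(\mathrm{s})$ by layered conditioning, integrating out the sources of randomness one at a time. Conditioning on $\ell$ and $f$ (and treating the transmit power $q$ as given), the only remaining randomness in the exponent is the Gamma-distributed fading power $h$. Because the MGF of a Gamma (Nakagami-$m$) random variable is available in closed form, this step collapses the expectation over $h$ into a negative-power kernel of the type $\big(1-\mathrm{s}\,q\,\ell^{-\alpha}\Omega(f)/(m\,q_0\ell_0^{-\alpha})\big)^{-m}$.

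Next I would average this kernel over the spatial distribution of $\ell$ using the radial density \eqref{radial_density}, namely $2\ell/D^2$ on $(0,D)$, and then over the random frequency offset $f$ (uniform across the band, which is how the bandwidth $W$ enters and supplies the spectral dimension of the model). The spatial integral $\int_0^D \big(1-c\,\ell^{-\alpha}\big)^{-m}\,(2\ell/D^2)\,\mathrm{d}\ell$ is the analytic heart of the derivation. The $\ell^{-\alpha}$ singularity at the origin is integrable (the integrand vanishes like $\ell^{\alpha m}$ there), and a substitution such as $u=\ell^{-\alpha}$ or $u=(\ell/D)^2$ recasts it into a standard form expressible through the Gauss hypergeometric function ${}_2F_1$. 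Folding the resulting $M_Z(\mathrm{s})$ back into $\exp(\mu(M_Z(\mathrm{s})-1))$ then yields \eqref{Final_Closed_form_MGF}.

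The hard part will be this spatial integral: coaxing $\int_0^D(1-c\ell^{-\alpha})^{-m}\ell\,\mathrm{d}\ell$ into a recognized special-function identity while tracking the branch of the fractional powers, and simultaneously carrying out the frequency averaging of $\Omega(f)$ in a manner that preserves the closed form. Everything else — the compound-Poisson composition and the Gamma MGF step — is routine once Lemma~\ref{lem:lemma2} is in hand.
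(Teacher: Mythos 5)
Your compound-Poisson skeleton coincides exactly with the paper's: the paper likewise conditions on the number of active interferers, invokes i.i.d.\ per-interferer terms and homogeneous transmit powers, and composes the single-interferer MGF with the Poisson PGF of Lemma~\ref{lem:lemma2} to get $M_I(\mathrm{s})=G_K\big(M_{I_k}(\mathrm{s})\big)=\exp\big(\mu(M_{I_k}(\mathrm{s})-1)\big)$, which is its equation \eqref{Closed_form_MGF}. Where you genuinely diverge is the order of averaging inside the single-interferer MGF. You integrate out the Nakagami fading $h$ first (via the Gamma MGF), then space, then frequency. The paper instead conditions on $h$, integrates over $\ell$ and $f$ first --- producing an incomplete Gamma function $\Gamma\!\left(-\tfrac{2}{\alpha},\cdot\right)$ in \eqref{Intf_MGF_gen} --- expands that incomplete Gamma as a (Laguerre) series in \eqref{individual_MGF}, and only then averages over $h$ term by term using the Gamma moments $E[h^j]=\Gamma(j+m)/(m^j\Gamma(m))$ to reach \eqref{Final_closed_form_MGF}.

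The gap in your route sits exactly where you defer it. The spatial integral you call the analytic heart is indeed tractable: $\int_0^D\big(1-c\,\ell^{-\alpha}\big)^{-m}\ell\,\mathrm{d}\ell$ does reduce to a ${}_2F_1$. But the result is a hypergeometric function whose argument is proportional to $\Omega(f)$, and the remaining average over $f$ of ${}_2F_1\big(\cdots;c'\,\Omega(f)\big)$ has no closed form for a general power spectral density (e.g.\ the raised-cosine spectrum used in the paper's simulations). The theorem's expression \eqref{Final_Closed_form_MGF} depends on the spectrum only through the integer-power functionals $\kappa(j)=\int_{-W/2}^{W/2}\Omega(f)^{j}\,\mathrm{d}f$; to arrive at anything of that shape you would have to expand your ${}_2F_1$ into powers of $\Omega(f)$ \emph{before} performing the $f$-integration --- which is precisely the role played by the paper's series expansion of the incomplete Gamma, and which effectively reconstructs the paper's ordering of steps. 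So your proposal, as written, identifies the wrong step as the crux: the device that makes the closed form possible is the series expansion that reduces the spectral averaging to the constants $\kappa(j)$, and your plan does not supply it. Once that expansion is inserted (and the fading average is then taken term by term, as the paper does), your approach closes and is essentially equivalent; without it, the final fold-back into $\exp\big(\mu(M_Z(\mathrm{s})-1)\big)$ cannot produce \eqref{Final_Closed_form_MGF}.
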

\begin{figure*}[t]
\begin{align} \label{Final_Closed_form_MGF}
{M_I}\left( s \right) = {\exp \Bigg\{\lambda \pi W \bigg( {\frac{{1 - {{\rm e}^{ - {D^2}\rho \tan \theta }}}}{{\rho \tan \theta }}} \bigg)\bigg( {{\frac{2}{{{\alpha}W}}\sum\limits_{n = 0}^\infty  {\sum\limits_{j = 0}^\infty  {\prod\limits_{i = 0}^n {\frac{{\,(i + j - \frac{2}{\alpha })\kappa (j)}}{{\Gamma\left( {n} \right)\Gamma\left(j-1\right)}}} } }{\left( {\frac{{{\rm{s}}\,q}}{{{D^\alpha }{q_0}\ell _0^{ - \alpha }}}} \right)^j}\frac{{\Gamma \left( {j + m} \right)}}{{{m^j}\Gamma \left( m \right)}}} - 1} \bigg) \Bigg\} }.\\
\hline
\notag
\end{align}
\vspace{-1.1cm}
\end{figure*}
\begin{proof}
Similar to~\cite{Hamdi2009unified} and \cite{Heath2015wearable}, for simplicity, homogeneous interferers are assumed, i.e., all interferers transmit at the same power. Therefore, given the distribution of $h$, the MGF of the received signal from an arbitrary interferer, ${M_{{I_k}|h}}\left( \rm{s} \right)$, is given by
\begin{align} \label{Intf_MGF_gen} \notag
{M_{{I_k}|h}}\left( \rm{s} \right)&= {\rm{E}}\left[ {{{\rm{e}}^{\frac{{\rm{s}}}{{{q_0}\ell _0^{ - \alpha }}}qh{\ell ^{ - \alpha }}\Omega (f)}}|h} \right] \\ \notag
&= \int_{ - \frac{W}{2}}^{\frac{W}{2}} {\int_0^D {{{\rm{e}}^{\frac{{\rm{s}}}{{{q_0}\ell _0^{ - \alpha }}}qh{\ell ^{ - \alpha }}\Omega (f)}}\frac{{2\ell }}{{{D^2}}}.\frac{1}{W}{\mkern 1mu} {\mkern 1mu} {\rm{d}}\ell {\mkern 1mu} {\rm{d}}f} }\\ \notag
&= \left( {\frac{2}{{{D^2}W\alpha}}} \right){\left( { - \frac{{{\rm{s}}qh}}{{{q_0}\ell _0^{ - \alpha }}}} \right)^{\frac{2}{\alpha }}}\\
&\hspace{1cm}\times \int_{ - \frac{W}{2}}^{\frac{W}{2}} {\Gamma \left( { - \frac{2}{\alpha }, - \frac{{{\rm{s}}qh\Omega \left( f \right)}}{{{D^\alpha }{q_0}\ell _0^{ - \alpha }}}} \right)\Omega {{\left( f \right)}^{\frac{2}{\alpha }}}} {\rm{d}}f.
\end{align}
Here, $f_k$ is a random variable that is uniformly distributed over $[\frac{-W}{2} \frac{W}{2}]$ which is a valid assumption in networks with Poisson field of interferers \cite{Haenggi2009Interference}, and $\Gamma \left( {a,x} \right) \triangleq \int_x^\infty  {{t^{a - 1}}{{\rm e}^{ - t}}dt} $ represents the Incomplete Gamma function. Using the Laguerre polynomials expansion of the Incomplete Gamma function \cite{Abramowitz1954Handbook}, \eqref{Intf_MGF_gen} is simplified to
\begin{align} \label{individual_MGF}
{M_{{I_k}|h}}( {\rm {s}}){=}\frac{2}{{{\alpha}W}}\sum\limits_{n = 0}^\infty  {\sum\limits_{j = 0}^\infty  {\prod\limits_{i = 0}^n {\frac{{(i + j - \frac{2}{\alpha })\,\kappa (j)}}{{\Gamma\left( {n} \right)\Gamma\left(j-1\right)}}} } } {\left( {\frac{{ {\rm{s}}{\mkern 1mu} q{\mkern 1mu} h}}{{{D^\alpha }{q_0}\ell _0^{ - \alpha }}}} \right)^j},
\end{align}
where $\kappa \left( {j} \right) = \int_{-W/2}^{W/2} {\Omega{{(f)}^{j}}{\rm d}f}$. Based on the assumption of general Nakagami-$m$ fading channel, $h$ is a Gamma distributed random variable representing the squared fading gain of the channel. Therefore, the MGF of the interference from the $k^{\text{th}}$ interferer can be expressed as
\begin{align} \label{Final_closed_form_MGF}\notag
&{M_{{I_k}}}\left( \rm s \right)=\frac{2}{{{\alpha}W}}\sum\limits_{n = 0}^\infty  {\sum\limits_{j = 0}^\infty  {\prod\limits_{i = 0}^n {\frac{{\,(i + j - \frac{2}{\alpha })\kappa (j)}}{{\Gamma\left( {n} \right)\Gamma\left(j-1\right)}}} } }\\
&\hspace{4cm}\times{\left( {\frac{{{\rm{s}}\,q}}{{{D^\alpha }{q_0}\ell _0^{ - \alpha }}}} \right)^j}\frac{{\Gamma \left( {j + m} \right)}}{{{m^j}\Gamma \left( m \right)}}.
\end{align}
Assuming i.i.d. interference signals, justified by the fact that the sources of the interference are independent from one another~\cite{Haenggi2009Interference}, we have
\begin{align} \label{Closed_form_MGF}\notag
{M_I}\left( {\mathop{\rm s}\nolimits}  \right) &= {\rm E}\Big[ {{{\rm{e}}^{{\rm{s}}\sum\limits_{k = 1}^K {{I_k}} }}} \Big]= \sum\limits_{k \ge 0} {{{\left( {{\rm E}\left[ {{{\rm{e}}^{{\rm{s}}{I_k}}}} \right]} \right)}^k}p\left( {K = k} \right)}\\
&= {G_K}\left( {{M_{{I_k}}}(\rm{s})} \right)={{\rm e}^{\lambda \pi W \left( {\frac{{1 - {{\rm e}^{ - {D^2}\rho \tan \theta }}}}{{\rho \tan \theta }}} \right)\left( {{M_{{I_k}}}\left( \rm{s} \right) - 1} \right)}}.
\end{align}
Substituting \eqref{Final_closed_form_MGF} in \eqref{Closed_form_MGF}, the closed-form expression for the accumulated interference is determined as in \eqref{Final_Closed_form_MGF}.
\end{proof}
\vspace{0.2cm}
\section{Simulation Results} \label{sec:simulation_results}
In this section, we present numerical results to evaluate the performance of the network based on the proposed interference model and validate the results with Monte Carlo simulation. A network region of an area of 100\,$\text{m}^2$ is considered. The normalized distance between the desired transmitter and receiver is set to 1\,$\text{m}$. We assume the pathloss exponent, $\alpha$, and the shape factor of Nakagami distribution, $m$, are set to 2.5 and 3, respectively. Here, similar to \cite{Heath2015wearable}, the power of all interferers assumed to be the same and set to 0\,$\text{dB}$. The beamwidth of the mmWave signals, i.e. $2\theta$, is set to 20 degrees. An ideal raised cosine (RC) filter is assumed at the receiver's side with roll-off factor of 0. In addition, we consider a raised cosine shaped power spectral density for the interfering signals, as well. It is worth mentioning that the proposed model is not limited to specific power spectral densities of the desired and interferers' signals.\\
In Fig.~\ref{fig:BERvsSNR_lambda}, BER versus SNR is shown for different values of $\lambda$. Here, the density of the number of blockages, $\rho$, is set to $10^{-4}$. As expected, as the density of the number of interferers decreases, the performance of the system improves. For higher SNR values, the level of the error floor depends on the different $\lambda$ values. Fig.~\ref{fig:BERvsSNR_rho} illustrates the performance of the system for different values of $\rho$, considering a fixed density of the number of interferers, i.e., $\lambda{=}10^{-4}$. As it is evident from Fig.~\ref{fig:BERvsSNR_rho}, as the number of blockages increases, the probability of the interferers being blocked increases and consequently the performance of the network improves. Here, having higher blockage density in the network enhances the level of the error floor. This is an important result that indicates mmWave signals sensitivity to blockages can be advantageous in densely deployed networks, where objects and users that serve as obstacles reduces the level of the interference. It is important to note that, in both Fig.~\ref{fig:BERvsSNR_lambda} and \ref{fig:BERvsSNR_rho}, the simulated average BER plots aligns well with the theoretical result from the derived interference model.
Fig.~\ref{fig:blockageVSnot_blocked} shows the BER versus SNR of the victim receiver with and without consideration of the blockages. As it is illustrated, when the presence of the obstacles is considered in the interference model, there is less interference signal introduced to the desired communication link. Unlike traditional wireless environment, the sensitivity of directional mmWave signals to the obstacles in the environment leads to a different interference profile that is effectively captured in the proposed model.
\begin{figure}[t]
\centering
\includegraphics[width=.48\textwidth,height=53mm]{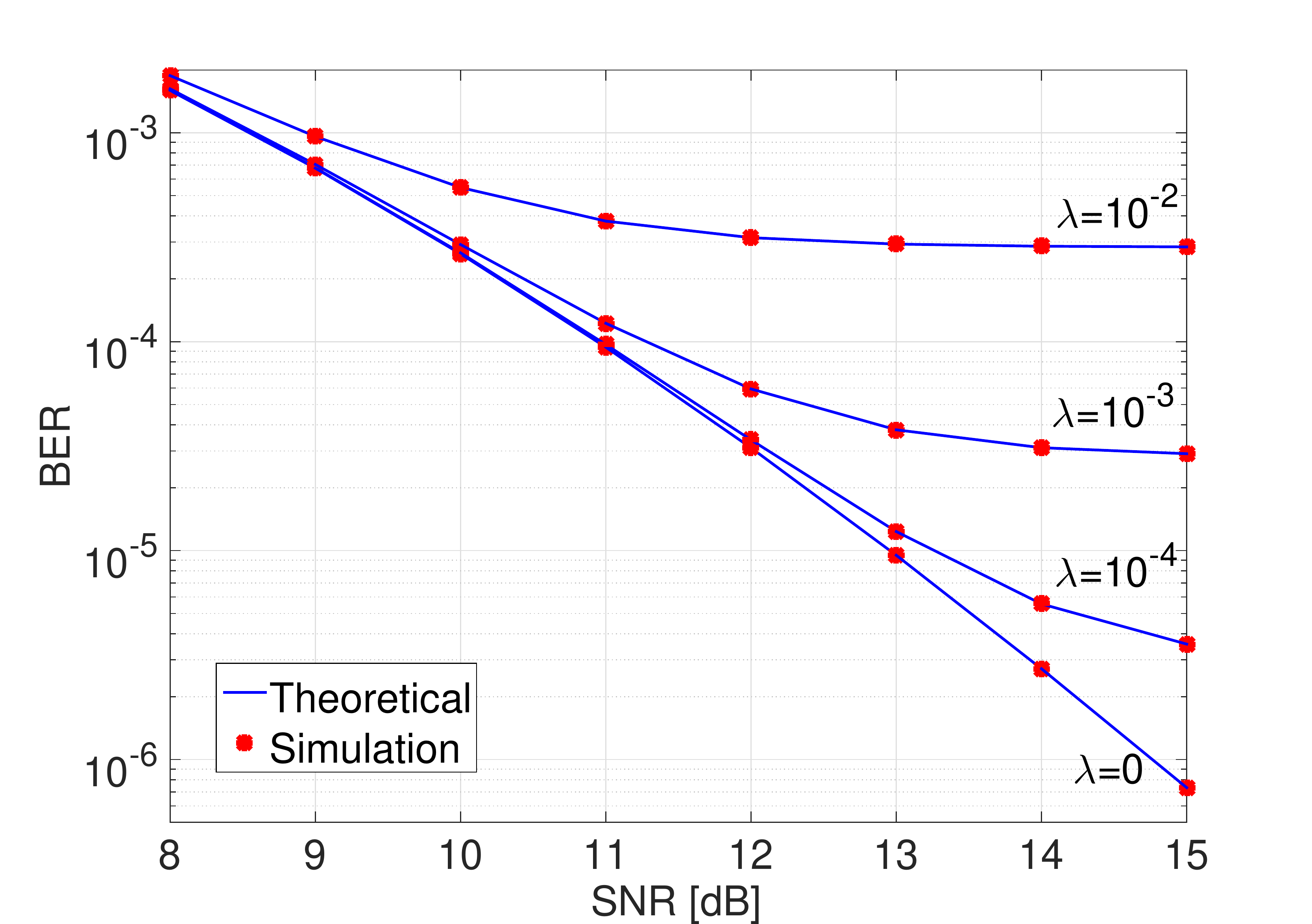}
\caption{Bit error rate versus SNR for different $\lambda$ values, $\rho{=}10^{-4}$.}
\vspace{-0.15cm}
\label{fig:BERvsSNR_lambda}
\end{figure}
\begin{figure}[t]
\centering
\includegraphics[width=.48\textwidth,height=53mm]{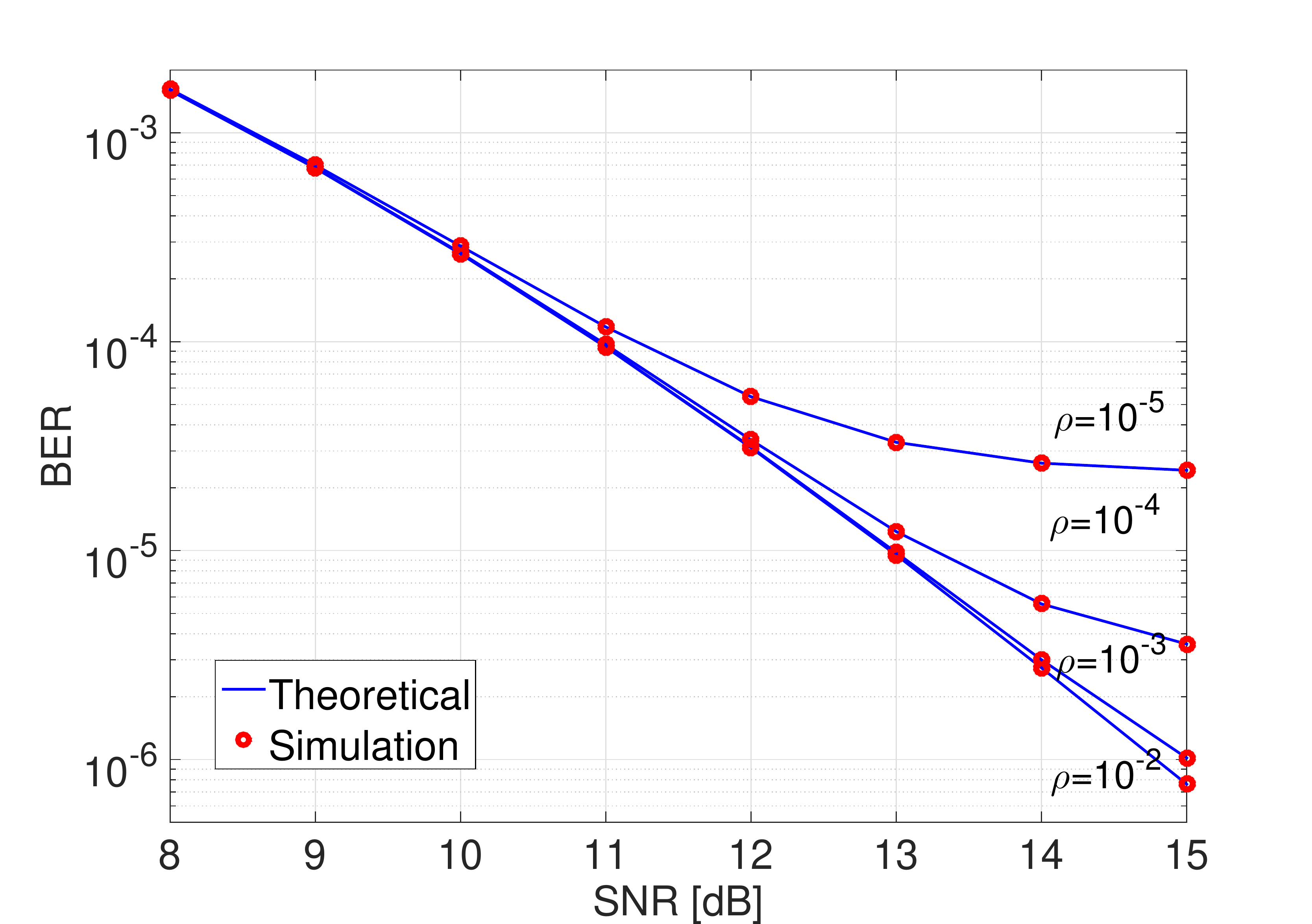}
\caption{Bit error rate versus SNR for different $\rho$ values, $\lambda{=}10^{-4}$. }
\vspace{-0.25cm}
\label{fig:BERvsSNR_rho}
\end{figure}
\begin{figure}[t]
\centering
\includegraphics[width=.48\textwidth,height=53mm]{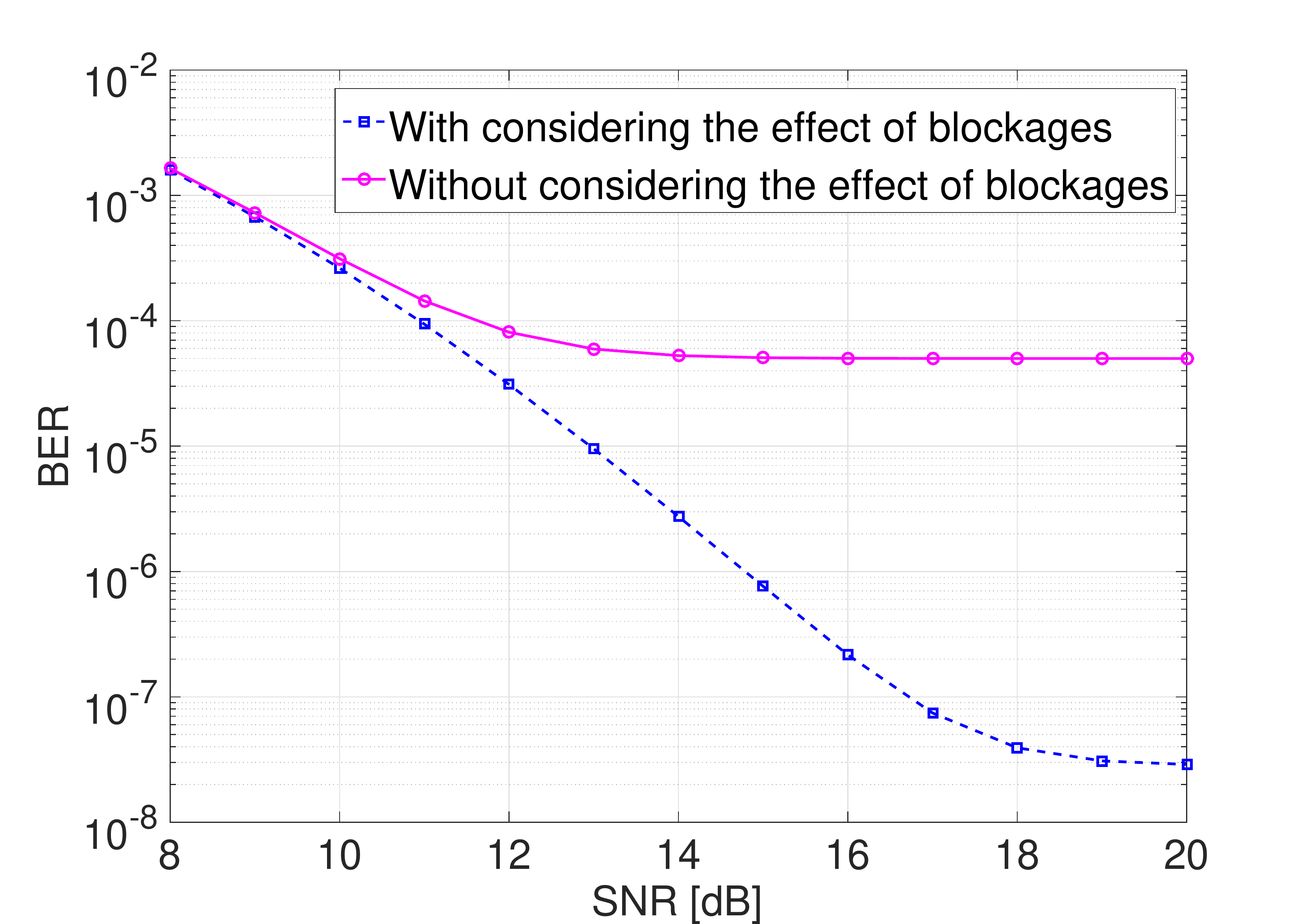}
\caption{Bit error rate versus SNR for $\lambda{=}10^{-4}$ and $\rho{=}10^{-2}$.}
\vspace{-0.15cm}
\label{fig:blockageVSnot_blocked}
\end{figure}

\section{Conclusion and Future Work} \label{sec:conclusion}
In this paper, we analyzed the performance of mmWave communication networks in the presence of Poisson field of interferers and blockages. Due to the use of the unlicensed mmWave frequency band, i.e. the $60$ GHz band, user terminals that share the same spectrum in the network could introduce unpredictable interference to the desired communication links and possible interference could exist in both frequency and space. Considering randomness in the presence of interference in both spectral and spatial domains, we proposed a spatial-spectral model for interference in the network. In the proposed model, MGF of the accumulated interference was derived and based on the closed-form expression of MGF, the average BER at the victim receiver was calculated. In future work, we consider heterogeneous interferers where each interferer transmits at different power level. In addition considering the mobility of the nodes would also be of particular interest.

\bibliographystyle{IEEEtran}

\bibliography{IEEEabrv,GBbibfile}

\end{document}